\newenvironment{sciabstract}{%
\begin{quote} \bf}
{\end{quote}}
\title{A deterministic balancing score algorithm to avoid common pitfalls of propensity score matching} 
\author
{Felix Bestehorn$^{1\ast}$, Maike Bestehorn$^{2}$, Markus Bestehorn$^{3}$, Christian Kirches$^{1}$\\
\\
\normalsize{$^{1}$Institute for Mathematical Optimization, Technische Universit\"at Braunschweig}\\
\normalsize{Universit\"atsplatz 2, 38106, Braunschweig, Germany}\\
\normalsize{$^2$ProMedCon GmbH, Lechnerstrasse 19, 82067, Sch\"aftlarn, Germany}\\
\normalsize{$^3$Riedpark 25, 6300 Zug, Switzerland}
\\
\normalsize{$^\ast$To whom correspondence should be addressed; E-mail:  f.bestehorn@tu-bs.de.}
}
\date{}
\newtheorem{prop}{Proposition}
\newtheorem{theorem}{Theorem}
\newtheorem{property}{Property}
\newtheorem{definition}{Definition}
\newtheorem{example}{Example}
\newenvironment{proof}
{\textit{Proof:} }
{\hfill $\square$}
\begin{document} 

\baselineskip16pt


\maketitle

\begin{sciabstract}
Propensity score matching (PSM) is the de-facto standard for estimating causal effects in observational studies. We show that PSM and its implementations are susceptible to several major drawbacks and illustrate these findings using a case study with $17,427$ patients. We derive four formal properties an optimal statistical matching algorithm should meet, and propose Deterministic Balancing Score exact Matching (DBSeM) which meets the aforementioned properties for an exact matching. Furthermore, we investigate one of the main problems of PSM, that is that common PSM results in one valid set of matched pairs or a bootstrapped PSM in a selection of possible valid sets of matched pairs. For exact matchings we provide the mathematical proof, that DBSeM, as a result, delivers the expected value of all valid sets of matched pairs for the investigated dataset.
\end{sciabstract}


\section{Introduction}\label{sec:intro}

Statistical matching (SM) is widely used to reduce the effect of confounding~\cite{Rubin1973,Anderson1980,Kupper1981} when evaluating the relative effects of two different paths of action in an observational study. For instance, medical studies use SM to compare mortality rates between two patient populations that have received two different treatments or procedures~\cite{Ray2012,Zhang2015,Gozalo2015,Zhang2016,Cho2016,Bruno2017,Nichay2017,Burden2017,Mcevoy2016,Schermerhorn2008,Lee2017,Capucci2017,Tranchart2016,Zangbar2016,Dou2017,Fukami2017,McDonald2017,Lai2016,Abidov2005,Adams2017,Kishimoto2017,Kong2017,Chen2016,Seung2008,Shaw2008,Liu2016,Svanstrom2013,Salati2017}. With more than $16{,}000$ citations in research papers within the last 12 months \cite{King2016}, Propensity Score Matching (PSM)~\cite{Rosenbaum1983} is the de-facto standard for SM in such applications.

While some limitations of PSM have been studied~\cite{King2016, Austin2011, Pearl2009} and the quality of PSM results have been discussed through empirical evaluations, PSM results have not yet been sufficiently investigated in a mathematical sense. This is particularly important since PSM results are often used for making critical decisions such as choosing the best medical procedure. On the basis of a general PSM algorithm we show, that PSM can lead to arbitrary decision making and that PSM-based results are susceptible to manipulation by cherry-picking outcomes supporting certain hypotheses. We illustrate our findings using the example of a real-world medical study and present  Deterministic Balancing Score Exact Matching (DBSeM) - a new approach for exact SM that delivers the average result for all valid sets of exact matchings for the investigated dataset and is therefore reproducible and reliable.

Specifically, we make the following contributions:

$C1$\label{C1}: We investigate potential pitfalls based on an analysis of general PSM implementations taken from guidelines for implementing PSM algorithm and illustrate our findings by using the database for isolated aortic valve procedures $2013$ containing information on $17{,}427$ patients, their treatment and various other, relevant parameters. 

$C2$\label{C2}: We formally derive four properties that an optimal SM algorithm has to meet: reproducibility of results, order-independence, data-completeness, and conservation. PSM does not have these properties. 

$C3$\label{C3}: We introduce DBSeM as a clustering-based SM approach and prove that DBSeM satisfies the four properties of an optimal SM algorithm.

$C4$\label{C4}: We show that bootstrapped PSM results converge towards the results gained by DBSeM, which is the average result of all sets of exact matched pairs.

The motivation behind our contributions are to develop an algorithm, which is usable for statistical matching in general and is deterministic. Fulfilling the deterministic property is important for the algorithm as results obtained through application of a deterministic algorithm can be reproduced by fellow researchers, leading to verifiability of results as well as to further common ground for scientific discussion in the field of observational studies. 

Note that this is a mathematical article. Hence the proven results are generally applicable to all datasets used in exact SM.

While this paper uses medical terminology such as "patients" or "treatment" to illustrate its content, our results are applicable to other fields of research with observational studies as well.

\section{Related work and definitions}\label{sec:related}

In the context of medical observational studies the propensity score (PS) is the probability that a patient is assigned to a particular treatment given a vector of observed covariates~\cite{Rosenbaum1983}. PSM matches patients with similar/equal PS to allow a comparison between treatment results. Thus PS and PSM are defined by Rosenbaum and Rubin in~\cite{Rosenbaum1983} as follows:

\noindent Given a set $G := \{x_1,\,\ldots,\,x_a,\,z_1,\,\ldots,\,z_b\}$ of patients. Let $A := \{x_1,\,\ldots,\,x_a\}$ and $B := \{z_1,\,\ldots,\,z_b\}$ be the patient partition for the respective treatments. The $s$ statistically relevant properties -- covariates -- of each patient $p \in G$ are specified by an $s$-dimensional covariate vector $cv(p)\in \mathbb{R}_{\geq 0}^s$ and the observed result is identified by $obs(p) \in \mathbb{R}$. 

In randomized studies the PS is known by design, whereas in non-randomized studies -- the case of the illustrative example -- it needs to be estimated from the dataset. PS is typically~\cite{Stuart2014} estimated using logistic regression, but can also be calculated through other regression methods such as probit, tobit or cox regression, with treatment as the dependent variable and covariates as baseline. Given regression coefficients $\beta_j,\, 0\leq j\leq s$, from the logistic regression the estimated PS of a patient $p$ is defined as 
\begin{equation}
\label{eq:prop_score}
ps(p) := \frac{e^{\beta_0 + \sum_{j = 1}^{s}\beta_{j}cv_j(p)}}{1+e^{\beta_0 + \sum_{j = 1}^{s}\beta_{j}cv_{j}(p)}}.
\end{equation} 

To compare patients with each other one can now compute the estimated propensity~score~differences~(PSD) from the PS of all patients for the dataset as 

\begin{equation}
\label{eq:psd}
psd_{i,\,j} := \vert ps(x_i) - ps(z_j)\vert,\, \forall 1\leq i \leq a,\,1\leq j \leq b. 
\end{equation}

Finally one has to match the patients and in general there are two classes of SM, 
that are used to match members of different sets, i.e., patients:
\begin{itemize}
	\item Exact matching \cite{Stuart2010,Iacus2011}: Only members of different sets with equal covariate vectors are matched, i.e., for PSM $psd_{i,\,j} = 0$.
	\item $\delta$-matching \cite{Stuart2010}: Members of different sets can be matched if they are similar enough according to a chosen similarity measure, e.g., Mahalanobis distance \cite{Iacus2011} or for $\delta$-PSM $psd_{i,\,j} \leq \delta$.
\end{itemize}

Different algorithmic realizations of $\delta$-matching are for example caliper matching, nearest neighbor matching or optimal matching~\cite{Stuart2010,Caliendo2005,Rosenbaum1989}.
\newline

\noindent The foundation for both, exact and $\delta$-PSM was laid by Rubin and Rosenbaum~\cite{Rosenbaum1983}, by introducing the notion of balancing scores. A balancing score $b(cv(p))$ of a patient is a value assignment, such that the conditional distribution of $cv(p)$ is the same for patients $p$ from both treatment groups, $A$ and $B$. Rubin and Rosenbaum showed that PS is the coarsest balancing score, while $cv(p)$ is the finest~(\cite{Rosenbaum1983}, section $2$) and that if treatment assignment is strongly ignorable, then the difference between the two respective treatments is an unbiased estimate of the average treatment effect at that balancing score value~(\cite{Rosenbaum1983}, theorem $3$).

We will, if not stated otherwise, only consider exact PSM in this paper. Besides ease of presentation our reasons are manifold:

\begin{enumerate}
	\item An ideal experimental design would be to compare the outcome of two therapies for pairs of patients with exactly the same condition vector. For this reason we focus on exact matching in this paper. Additionally exact matching is the best possible type of of $\delta$-PSM~\cite{Rubin1974}.
	\item \label{item:ex_1} Exact matching is a special case of the more general $\delta$-matching. Thus every $\delta$-matching contains an exact matching or at least the attempt of an exact matching on a subset of patients and pitfalls emerging in exact matching are present in $\delta$-matching as well.
	\item \label{item:ex_2} If pitfalls are present in exact matching, then letting $\delta >0$, either amplifies the effects of these pitfalls or does not affect them in any way. Most importantly the pitfalls do not vanish.
	\item \label{item:ex_3} Pitfalls emerging in exact matching are significant for the whole theory of PSM, as the best case for SM is a dataset, which is fully matchable by exact matching.
	\item If no exact matches between two therapy groups exist, then the question of comparability of the two groups on the basis of the given dataset arises as they have no common support.
\end{enumerate}

Note that, because of reasons~\ref{item:ex_1}--\ref{item:ex_3}, considering only exact matching does not impair the scope of our deductions regarding the implications for $\delta$-matching.

Additionally we limit the presentation to $1$:$1$ exact matchings as $1$:$1$ matching procedures have the highest amount of possible matchings for fixed match-sizes and all possible $k$:$l$ matchings are included in the set of possible $1$:$1$ matchings, see subsection~\ref{subsec:many-one} for further explanations regarding $k$:$l$ and one-to-many PSM.

Algorithm~\ref{alg:PSM} describes the general structure of an $1$:$1$ PSM-based matching procedure (cf.~\cite{Caliendo2005}): \FloatBarrier
\setlength{\intextsep}{7.5pt}
\begin{algorithm}
\caption{General $1$:$1$ PSM-based/statistical matching procedure}
\label{alg:PSM}
\begin{algorithmic}[1]
\State  \label{state:psm} Compute $psd_{i,\,j}\,\forall 1\leq i \leq a,\,1\leq j \leq b$ (e.g., using logistic or tobit regression).
\State \label{state:balance} Check balancing of propensity score (e.g., known covariates of high influence should have high influence on the regression value).
\For{each patient $x_i \in A\, (1\leq i \leq a)$} \label{state:PSM_match}
	\Statex Create Matching Set $M_i = \emptyset$.
	\Statex Search for unmatched patient $z_j \in B\, (1\leq j \leq b)$ with $psd(i,\,j) \equiv 0$.
	\Statex \textbf{If} $z_j \in B$ was found in previous step: Set $M_i := \{x_i,\,z_j\}$
	\Statex	Continue with next patient from $A$.
\EndFor
\State \label{state:covariate} Check covariate balancing in matches and matching quality (e.g., homogenization) and output matching sets $M_i$.	
\end{algorithmic}
\end{algorithm}
\FloatBarrier

Steps \ref{state:balance} and \ref{state:covariate} do not have to be considered in this paper because exact matching -- if viable -- completely balances covariates and achieves complete harmonization.

Furthermore the various matching strategies applicable in step $3$, such as nearest neighbor~\cite{Stuart2010}, stratification~\cite{Iacus2011} or optimal~\cite{Rosenbaum1989} matching, are irrelevant for this paper. This is because each strategy's strengths and weaknesses come to fruition in exact matching as $psd_{i,\,j}=0$ (and $cv(x_i) \equiv cv(y_j)$) is either true for all PSM strategies or for none. 

\subsection{$1$:$2$ and one-to-many PSM}\label{subsec:many-one}
$1$:$2$~PSM is a variant of PSM were one patient from one therapy group gets matched to two patients from the other therapy group, if there are two patients meeting the matching criteria. This leads to a loss of information as possible matchings could be ignored. For instance, let $x_i$ be an arbitrary patient of $A$ and there exist no other patients in $A$ with the same PS. Assuming that there are ten patients in $B$ with the same PS as $x_i$, there are
$(\begin{smallmatrix}
10\\
2
\end{smallmatrix}) = 45$ many possible $1$:$2$ matchings out of which only one gets chosen, while the information in the remaining eight unmatched patients gets lost. Note that this can happen in $\delta$-PSM as well. 

Obviously this behavior persists in the general case of $k$:$l$ PSM, where $k,\,l\in \mathbb{N}$, $k$ patients from one therapy group get matched to $l$ patients of the other group, if all patients meet the matching criteria. Consequently we will not consider one-to-many or its more general case of $k$:$l$ PSM in this article, see also subsection \ref{subsec:incomplete} on incomplete usage of data.

\subsection{Bootstrapping}\label{subsec:bootstrap}
Bootstrapping techniques \cite{Austin2014} are applied in PSM to avoid negative effects occurring due to randomness or statistical outliers. Considering the example from the previous subsection \ref{subsec:many-one} again:
Let $x_i$ be an arbitrary element of $A$ and there exist ten patients from $B$ with equal PS. Assume that only a single patient $z_j$ out of the ten has $obs(z_j) = 1$. Matching only $x_i$ and $z_j$ and thus leaving the remaining nine possible matches in $B$ unmatched distorts the result. This persists, even if the matching choice was made randomly, as the error lies within the choice of matching only one pair. Note that variants of one-to-many PSM are susceptible to the same error. Bootstrapping avoids this by taking multiple samples, meaning that the matching part of the algorithm is run multiple times. 

As each sample can be perceived as a different permutation of the input, one has to take a high number of samples, which adds an overhead to bootstrapping. Because of this added overhead, the bootstrapping approach seems to be used very rarely. In comparison to the widespread use of PSM, only few studies, e.g., \cite{Knight2016,Chiu2016,Ounpraseuth2012}, make use of bootstrapping with PSM.  We prove that the result of executing PSM with bootstrapping will converge to the result delivered by DBSeM which avoids the overhead of bootstrapping and does not suffer from the remaining pitfalls of PSM.

\section{PSM's Pitfalls}\label{sec:pitfalls}

With regard to the goal of SM it is desirable to establish a matching procedure that delivers identical results for the same input set. We show in this section that results of multiple PSM runs differ significantly even if PSM is applied to the same dataset and identify some of PSM's Pitfalls.

For illustration we use the quality assurance dataset of isolated aortic valve procedures in $2013$, which is an official mandatory dataset including all isolated aortic valve surgery cases in German hospitals and contains patient information (covariates) and mortality information (observed result) for $17{,}427$ patients. For each patient, the corresponding record contains $19$ variables, i.e., $s=19$. This external quality assurance database for isolated aortic valve procedures 2013 of the German Federal Joint Committee contains $9{,}848$ SAVR (replacement surgery of aortic valves) cases and $7{,}579$ TF-AVI cases (transcatheter/transfemoral implantation of aortic valves)\footnote{The cases were documented in accordance with \S 137 Social Security Code V (SGB V) by hospitals registered under \S 108 SGB V. The data collection is compulsory for all in-patient isolated aortic valve procedures in German hospitals.}, held by the Federal Joint Committee (Germany). Given the dataset it can safely be assumed that the data is independent in a statistical sense as patients were only recorded once. The illustrative results, i.e., mortality rates, were calculated using the internationally validated Euroscore~II\footnote{\url{http://www.euroscore.org}} variables and the PSM functions provided by IBM SPSS Statistics for Windows, Version $24.0$.

\subsection{Randomness of Choice and sort order dependence of PSM}\label{subsec:randomness}

For clarification of exposure the following definitions are essential:

\begin{definition}[Sort order]
\label{def:sort_order}
The \emph{sort order} for SM is the order in which patients are ordered in the matrix representing the dataset.
\end{definition}

The following example illustrates the meaning of sort order for SM:

\begin{example}
\label{ex:sort_order}
Let $x_1$ and $x_2$ be patients with covariate vector $cv(x_1) = (1,\,0,\,1)$ and $cv(x_1) = (0,\,1,\,0)$. The order in which $x_1$ and $x_2$ appear in the matrix representing the dataset is the sort order for SM covariates. Thus
\begin{equation*}
\begin{matrix}
 1 & 0 & 1 & (cv(x_1))\\
 0 & 1 & 0 & (cv(x_2)) \\
\end{matrix}
\end{equation*}
and 
\begin{equation*}
\begin{matrix}
0 & 1 & 0 & (cv(x_2))\\
1 & 0 & 1 & (cv(x_1)) \\
\end{matrix}
\end{equation*}
represent different sort orders.
\end{example}

Note that a sort order is valid for the dataset as a whole, thus the whole data matrix is ordered such that a column represents the value of a specific covariate.

Obviously the information contained in a dataset is independent of the sort order of the given dataset. This motivates the following definition:

\begin{definition}
\label{def:sort_order_dep}
An SM-algorithm is sort order dependent if given a dataset with therapy groups $A$ and $B$ the algorithm calculates different results for different sort orders.
\end{definition}

Looking at step~\ref{state:PSM_match} of the general PSM procedure (algorithm~\ref{alg:PSM}), one can infer that if sort order dependence was not in mind and thus taken care of, PSM implementations generally are sort order dependent as the first, or according to a random number, potential match, regardless of the precise matching criteria, i.e., nearest-neighbor, optimal, caliper-matching, gets picked. Additionally the matching of fixed sizes, independent on the exact values of $k$ and $l$, is sort-order dependent as well for the same reason.

The sort order dependency can also be observed by looking at the results from Table~\hyperref[tab:random_runs]{$1$}, which presents PSM calculations on the aforementioned dataset. 

\begin{table}[h!]
	\centering
	\begin{tabular}{l|ll|ll|l}\hline
		$1502$ exact matchings with & \multicolumn{2}{|c|}{SAVR} & \multicolumn{2}{|c|}{TF-AVI} &$\chi^2$ Test \\
		regards to all $19$ Euroscore II& \multicolumn{2}{|c|}{in-hospital death} & \multicolumn{2}{|c|}{ in-hospital death} & (2-tailed)\\
		variables and without replacement& count & \% & count & \% & p-value \\\hline
		Run $1$ & $73$ & $4.9\%$ & $33$ & $2.2\%$ & $<0.0001$\\ 
		Run $2$ & $73$ & $4.9\%$ & $34$ & $2.3\%$ & $<0.0001$\\
		Run $3$ (different sort order) & $42$ & $2.8\%$ & $32$ & $2.1\%$ & $0.2398$\\
		\hline
	\end{tabular}
	\caption{Results of exact 1:1 PSM runs for two heart-surgery methods without bootstrapping}
	\label{tab:random_runs}
\end{table}
\FloatBarrier
The rows labeled Run~$1$ and Run~$3$ (different sort order) differ only in the sort order given in the input. They differ precisely by changing the sort order through ordering one covariate in descending, the patients with $1$ as entry for this covariate come first, instead of ascending order. If PSM would be sort order independent, the result should at least be similar, as the dataset and every other given input was exactly the same. As the results largely differ the possible conclusions drawn from looking at Run~$3$ are contrary to the conclusions one would draw from looking at Run~$1$.

Besides sort order dependence of PSM there is a random element included as well as Run $1$ and Run $2$ used the same sort order, but obtain a slightly different result. The randomness effect occurs for patients $x\in A$ with more than one patient $z \in B$ such that $ps(x) \equiv ps(z)$. For a method used in a scientific context this should not happen as verification of results through reproduction by fellow researchers with the same dataset and software is severely impeded as results are difficult to reproduce.

To clarify the importance of sort order dependence and randomness of choice we calculated the worst and best possible results for exact $1$:$1$ PSM on the given dataset, for results see Table~\ref{tab:best_worst}. The exemplary dataset had mortality as observed values, thus a patient is either dead or alive at the end of the study. Consequently the best case for a therapy group means that living patients from the therapy group were matched to living patients, while avoiding matching living patients to dead patients as long as possible. This can for example be done in the best case for every patient of one partition group, e.g., $A$, by taking the patient's PS and if there is a living patient in $B$ with the same PS, then both living patients get matched. If there is no living patient in $B$, but dead patients with the same PS exist, then they get matched. Naturally patients with different PS do not get matched as we only considered exact matching. One should note that the observed result is not included in the regression model and does not need to be included for simulating a PSM in this manner as patients were only potentially matched if the PS of both patients coincided.

\begin{table}[h!]
\centering
	\begin{tabular}{l|ll|ll|l}\hline
	$1{,}502$ exact matchings with & \multicolumn{2}{|c|}{SAVR} & \multicolumn{2}{|c|}{TF-AVI} &$\chi^2$ Test\footnote{t-test p-values for the first four rows are $<0.0001$ and for PSM with replacement $0.0005$.} \\
	regards to all $19$ Euroscore II& \multicolumn{2}{|c|}{in-hospital death} & \multicolumn{2}{|c|}{ in-hospital death} & (2-tailed)\\
	variables and without replacement& count & \% & count & \% & p-value \\\hline
	Best Case & $24$ & $1.6\%$ & $15$ & $1.0\%$ & $0.1470$\\
	Worst Case & $73$ & $4.9\%$ & $50$ & $3.3\%$ & $0.0342$\\
	Best SAVR/Worst TF-AVI & $24$ & $1.6\%$ & $50$ & $3.3\%$ & $0.0021$\\
	Worst SAVR/Best TF-AVI & $73$ & $4.9\%$ & $15$ & $1.0\%$ & $<0.0001$\\
	Uniform Bootstrapping ($10{,}000$ samples)& $52.47$ & $3.49\%$ & $32.10$ & $2.14\%$ & $0.0210$ (t-test)\\\hline\hline
	PSM with replacement ($3{,}288$ matches) & $73$ & $2.2\%$ & $85$ & $2.5\%$ & $0.3339$ \\
	\end{tabular}
	\caption{Results for exact 1:1 PSM with the same dataset as in Table $1$}
	\label{tab:best_worst}
\end{table}

As the matching procedure was exact PSM the results are balanced regarding the covariates, thus, even if constructed, each of the presented cases is a valid outcome of applying PSM to the dataset. Furthermore the true effect is generally unknown in practice and there is a random element in place. Thus identification of a result as an outlier can be difficult, especially since the balance of these matches is perfect. As the results regarding the observed value is completely different, the conclusions drawn from these results can differ as well. For example most of the medical studies cited in the introduction, e.g., \cite{Burden2017,Ray2012,Nichay2017,Gozalo2015,Mcevoy2016,Capucci2017,Bruno2017,Tranchart2016, Zangbar2016, Dou2017,Fukami2017,McDonald2017,Kishimoto2017,Kong2017,Chen2016,Seung2008,Shaw2008,Liu2016,Svanstrom2013}, given the decision criteria of a $\chi^2$-value above $3.841$, and respectively a p-value below $0.05$, the null hypothesis  ( $H_0$: \emph{The mortality-rate does not depend on therapy}), would be rejected 
for Best SAVR/Worst TF-AVI and Worst SAVR/Best TF-AVI from Table~\ref{tab:best_worst} even though the direction of the results are different, the matchings 
are completely balanced and computed using the same dataset.

Bootstrapping \cite{Austin2014} can solve some of the aforementioned issues if the selection of a matching partner among many is uniform. Thus, we define:
\begin{definition}\label{def:ubPSM}
A bootstrapped PSM is called uniformly bootstrapped PSM (ubPSM) iff the selection choice of patients in $A$ to be matched with a single patient from $B$ of equal PS has the same probability for all patients from $A$ and vice versa.
\end{definition}
Note that if the uniformity assumption made in definition~\ref{def:ubPSM} does not hold, then a bootstrapped result can be skewed, this holds as well if too few bootstrapping iterations were done. Note that this assumption does not hold if one simply applies randomness to the matching procedure. 

Table~\hyperref[tab:best_worst]{$2$} shows the result of applying a ubPSM to our dataset. It is evident that the result significantly differs from some of the other results that were not bootstrapped. In regard of the  pitfall introduced in this section and subsection~\ref{subsec:bootstrap}, it is obvious that PSM with bootstrapping improves result reliability in exchange for computational effort as the change of variance of the result is smaller. An additional drawback of bootstrapping is that one cannot be certain that the drawn amount of samples during the bootstrapping process is large enough. The method shown in section~\ref{sec:DBSeM} of this paper delivers an alternative for this approach and does not suffer from the drawbacks introduced through bootstrapping.

\subsection{Incomplete usage of Data} \label{subsec:incomplete}
For this paragraph suppose that patients $\tilde{x}_1,\,\ldots,\,\tilde{x}_n$ and $\tilde{z}_1,\,\ldots,\,\tilde{z}_m$ with identical PS, $psd(i,\,j) = 0\,\,\forall i \in \{1,\,\ldots,\,n\},\,j\in \{1,\,\ldots,\,m\}$, exist and that $n < m$. 

An exact $1$:$1$ PSM algorithm will create $n$ matching pairs during the matching step, step~\ref{state:PSM_match} in algorithm \ref{alg:PSM}. Therefore $m-n$ many potential matches are ignored and the information provided by the dataset is only incompletely used. As this can, and in practice usually will, happen many times during a single PSM iteration a potentially large amount of information is ignored.

Taking a look at the exemplary calculations the exact $1$:$1$ PSM generates $1502$ matching pairs and thus uses only $15,3\%$ of available SAVR and $19,8\%$ of available TF-AVI-patient data. In section \ref{alg:DBSeM} we will present an algorithm that uses all of the available data and that potentially $34,1\%$ SAVR and $29,7\%$ TF-AVI patients are exact $1$:$1$ matchable.

For the reminder of this paragraph~(\ref{subsec:incomplete}), we will consider $\delta$-matching and assume that the $\tilde{x}_1,\,\ldots,\,\tilde{x}_n$ and $\tilde{z}_1,\,\ldots,\,\tilde{z}_m$ have $psd(i,\,j) \leq \delta\,\forall i \in \{1,\,\ldots,\,n\},\,j\in \{1,\,\ldots,\,m\}$ for given $\delta > 0$ and $n < m$. A $1$:$1$ PSM algorithm will again create at most $n$ matching pairs. Furthermore the larger therapy group usually provides even more potential matching patients for $\delta >0$, thus $n << m$ and the rate of information used is even lower than in the exact matching case.

Note that the shortly discussed $k$:$l$ matching variants, presented in subsection~\ref{subsec:many-one}, will construct at most $n$ matching pairs. Consequently they present no valid solution to this pitfall.

PSM with replacement is supposed to solve the problem of incomplete data usage, but it has the drawback that some patients disproportionally impact the PSM result. This leads to results differing significantly from the outcomes gained through PSM without replacement. This can also be observed by looking at the result presented in the last row of Table~\hyperref[tab:best_worst]{$2$}. While weighting matches according to their frequency~\cite{Stuart2010} alleviates the problem, the distorting nature of PSM with replacement along with the other presented pitfalls persists.

\subsection{Calculation of Propensity Scores}\label{subsec:PSM_calc}

The PS for PSM are typically computed using a type of regression. This results in issues related to floating point comparison, machine precision and the non-uniqueness of solutions of a nonlinear optimization problem. Alongside these issues one has to consider the property stated by proposition \ref{prop:PSM_calc}:

\begin{prop}\label{prop:PSM_calc}
	If no two index sets  $I,\,J \subseteq \{1,\,\ldots,\,s\}$ with $I \neq J$ and the property	
\begin{equation}
\label{eq:PSM_combination}
	\sum_{i \in I} \beta_i = \sum_{j \in J} \beta_j,
\end{equation}
exist, then: Two patients $x,\,z$ have the same covariate vectors, $cv(x) \equiv cv(z)$, if and only if they have the same logistic regression propensity scores, $ps(x) \equiv ps(z)$.
\end{prop}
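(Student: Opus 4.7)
The plan is to separate the two directions and concentrate effort on the non-trivial one. The "$\Leftarrow$" direction is immediate: equation~\eqref{eq:prop_score} exhibits $ps(p)$ as a deterministic function of $cv(p)$, so $cv(x)\equiv cv(z)$ forces $ps(x)\equiv ps(z)$ irrespective of any structure on the $\beta_j$'s. Hence the entire content of the proposition sits in the forward direction. My first step there is to strip away the logistic wrapper: since $\sigma(t):=e^{t}/(1+e^{t})$ is strictly monotone and therefore injective, $ps(x)=ps(z)$ is equivalent to equality of the linear predictors, which after cancelling $\beta_0$ reduces to the single scalar identity
\begin{equation}
\label{eq:plan_linear}
\sum_{j=1}^{s}\beta_j\bigl(cv_j(x)-cv_j(z)\bigr)=0.
\end{equation}

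The next step is to recast \eqref{eq:plan_linear} in the index-set language of the hypothesis. Reading the covariates as binary-valued -- the natural interpretation given both the Euroscore~II variable set of the case study and the fact that the hypothesis is phrased purely as sums over subsets of indices -- the differences $cv_j(x)-cv_j(z)$ lie in $\{-1,0,1\}$. I would then define
\begin{equation*}
I:=\{\,j:cv_j(x)=1,\;cv_j(z)=0\,\},\qquad J:=\{\,j:cv_j(x)=0,\;cv_j(z)=1\,\},
\end{equation*}
which are disjoint by construction. Equation~\eqref{eq:plan_linear} then reads $\sum_{i\in I}\beta_i=\sum_{j\in J}\beta_j$, which is precisely \eqref{eq:PSM_combination}. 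Now suppose for contradiction that $cv(x)\not\equiv cv(z)$; then $I\cup J\neq\emptyset$, and together with $I\cap J=\emptyset$ this forces $I\neq J$ (two disjoint sets can be equal only if both are empty). The pair $(I,J)$ thus witnesses the forbidden configuration \eqref{eq:PSM_combination}, contradicting the hypothesis, and we conclude $cv(x)\equiv cv(z)$.

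The main obstacle, and the reason the proposition deserves explicit formulation, is the tacit restriction to binary covariates. For general $cv(p)\in\mathbb{R}_{\geq 0}^{s}$ the index-set hypothesis on the $\beta_j$'s is genuinely insufficient: with $s=2$, $\beta_1=1$, $\beta_2=2$, the vectors $(1,0)$ and $(0,\tfrac{1}{2})$ produce identical linear predictors even though no non-trivial subset identity among the $\beta_j$'s holds. I would therefore either make the binary assumption explicit in the statement, or else recast the hypothesis as a rational-independence condition on the $\beta_j$'s relative to the finite set of attainable covariate differences. Once that clarification is in place, the proof is short and requires no further regression-specific machinery beyond the monotonicity of the logistic link and the bookkeeping of disagreement indices used above.
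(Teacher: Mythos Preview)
Your argument is correct and follows the same skeleton as the paper's proof: invoke strict monotonicity of the logistic link to reduce $ps(x)=ps(z)$ to equality of the linear predictors, then derive a contradiction from the subset-sum hypothesis on the $\beta_j$'s. The paper presents this last step as a bare assertion (``regression coefficients are unique in the sense of linear combinations''), whereas you actually carry it out by constructing the disagreement sets $I$ and $J$ and identifying the resulting relation with \eqref{eq:PSM_combination}. More importantly, your counterexample with $(\beta_1,\beta_2)=(1,2)$ and covariate vectors $(1,0)$, $(0,\tfrac{1}{2})$ exposes a genuine lacuna in the paper's formulation: with $cv(p)\in\mathbb{R}_{\geq 0}^{s}$ as stated, the forward implication is simply false, and the binary restriction you make explicit is exactly what is needed to salvage it. The paper's proof silently relies on this restriction without ever stating it, so your version is the more rigorous of the two.
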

\begin{proof}
Assume there exist no two index sets $I,\,J$ satisfying equation \eqref{eq:PSM_combination}, but that  $x,\,z$ are two patients with different covariate vectors, $cv(x) \neq cv(z)$, and equal propensity scores, $ps(x) \equiv ps(z)$. Then the following equations lead to a contradiction.
\begin{eqnarray*}
ps(x) = ps(z) &\Leftrightarrow& \frac{e^{\beta_0 + \sum_{j = 1}^{s}\beta_{j}cv_{j}(x)}}{1+e^{\beta_0 + \sum_{j = 1}^{s}\beta_{j}cv_{j}(x)}} = \frac{e^{\beta_0 + \sum_{j = 1}^{s}\beta_{j}cv_{j}(z)}}{1+e^{\beta_0 + \sum_{j = 1}^{s}\beta_{j}cv_{j}(z)}}\\
&\Leftrightarrow& e^{\sum_{j = 1}^{s}\beta_{j}cv_{j}(x)} = e^{\sum_{j = 1}^{s}\beta_{j}cv_{j}(z)}\\
&\Leftrightarrow& \sum_{j = 1}^{s}\beta_{j}cv_{j}(x) = \sum_{j = 1}^{s}\beta_{j}cv_{j}(z)\\
&\Leftrightarrow& cv_j(x) = cv_j(z) \textup{ for all } 1 \leq j\leq s.
\end{eqnarray*}
The last identity holds because by assumption there exists no index sets $I,\,J$ such that equation~\eqref{eq:PSM_combination} holds, thus regression coefficients are unique in the sense of linear combinations. As $cv_j(x),\,cv_j(z) \in \mathbb{R}_{\geq 0}$ this results in a contradiction to the initial assumption that the covariate vectors are different. The opposite direction holds as all relations were equivalent. 
\end{proof}
\bigskip

According to Proposition~\ref{prop:PSM_calc}, PS are not unique if equation~\eqref{eq:PSM_combination} holds for any combination of logistic regression coefficients. Thus, patients with different covariate vectors match despite using exact PSM. This property extends to $\delta$-PSM as one cannot be sure that patients with similar PSs have similar CVs.

This concludes our discussion regarding contribution~\hyperref[C1]{$C1$}. Based on the presented pitfalls, we derive a set of properties which an optimal SM algorithm should have in the next section.

\section{Properties for SM algorithms}\label{sec:properties}

As shown in the previous section, PSM does not compute verifiable and reliable results. Properties~\ref{prop:repro} and~\ref{prop:sort-order} formalize corresponding properties for SM algorithms:

\begin{property}\label{prop:repro}
An SM algorithm has the reproducibility property iff the results given the same input remain exactly the same for any number of computations.
\end{property}

\begin{property}\label{prop:sort-order}
An SM algorithm has the property of sort-order independence iff the result remains the same even if the sort order of covariates of the dataset is changed.
\end{property}

SM algorithms possessing properties~\ref{prop:repro} and~\ref{prop:sort-order} can still produce non-reliable results as they are not necessarily matching in a well defined manner. This is addressed by the following two properties:

\begin{property}\label{prop:complete}	
An exact SM algorithm has the data completeness property, iff for all permutations of patients $\tilde{x}_1,\,\ldots,\,\tilde{x}_n \in A$ and $\tilde{z}_1,\,\ldots,\,\tilde{z}_m \in B$ with identical PS and $m\neq n$, the observed information of all $n+m$ patients has influence on the algorithm's result.
\end{property}

For completeness of exposure we will give an extension of the data completeness property for exact matching to $\delta$-matching here. The extension can be done by introducing a cost function for the matching and the notion of existing possible matches:

\begin{definition}
\label{def:cost_match}
Let $M = \{M_1,\,\ldots,\,M_{|M|}\}$ be a matching and denote the matched patient from therapy group $A$ within the matching set $M_i$ of $M$ with  $M_i(A)$. Then the weight of the matching $M$ is defined by
\begin{equation}
\label{eq:cost_match}
w(M) \coloneqq \sum_{i=1}^{|M|} psd(M_i(A),\,M_i(B)).
\end{equation}
\end{definition}

\begin{definition}
\label{def:poss_match}
A patient $\tilde{x_i} \in A$ is matchable in a $\delta$-matching, if there exists a patient $\tilde{y_j} \in B$ such that $psd(i,\,j) \leq \delta$.
\end{definition}

\begin{definition}
\label{prop:delta_completeness}
A $\delta$-SM algorithm has the data completeness property iff for a matching $M$ and all patients $\tilde{x}_1,\,\ldots,\,\tilde{x}_n \in A$ and $\tilde{z}_1,\,\ldots,\,\tilde{z}_m \in B$ with an existing possible match are matched and $w(M)$ is minimal.
\end{definition}

SM algorithms fulfilling the data completeness property use all information contained in the input as no possible match is ignored. Even PSM with replacement does not have the data completeness property as randomness and sort order dependency still inhibit choosing some possible matches. The last property necessary for an optimal SM matching algorithm guarantees that the determined matching has no additional errors besides the errors stemming from the underlying data.

\begin{property}\label{prop:conserving}
An SM algorithm is called conserving if it is only possible for patients to be matched
\begin{itemize}
	\item in exact matching, if their covariate vectors are the same.
	\item in $\delta$-matching, if their covariates are similar enough according to the chosen similarity measure.
\end{itemize}
\end{property}

While PSM is often assumed to have the conserving property, it is computed using estimated regression scores and this can introduce additional errors as Proposition~\ref{prop:PSM_calc} does not always hold. This concludes our discussion regarding contribution \hyperref[C2]{$C2$} and we present our SM algorithm -- Deterministic Balancing Score exact Matching (DBSeM) -- meeting all four properties for exact SM next.

\section{Deterministic Balancing Score Matching}\label{sec:DBSeM}

The general idea of DBSeM is to cluster patients from a therapy group with same covariate vectors and generate a matching between both therapy groups over the constructed clusters.

Clustering of patients $p$ and $q$ requires a distance metric. In exact matching any metric would be applicable, but for ease of presentation we will use the Manhattan metric $d(p,\,q) := \sum_{i = 1}^{s} \vert cv_i(p) - cv_i(q) \vert$ from now on. Note that patients $p$ and $q$ have equal covariate vectors iff $d(p,\,q) \equiv 0$.

\begin{definition}\label{def:cluster}
A cluster of patients from one therapy group $H$ is a non-empty set $C_{H}$ of patients with properties
\begin{enumerate}
	\item $d(p,\,q) = 0 \,\,\forall p,\,q \in C_{H}$.
	\item $\nexists q \in H$ such that $q\notin C_H$ and $d(p,\,q) = 0$ for $p\in C_H$.
	\item If $p\in C_H$, then the assigned covariate vector of $C_H$ is $cv(p)$.
\end{enumerate}
\end{definition}

Because of definition~\ref{def:cluster} clusters have the following characteristics:

\begin{prop}\label{prop:cluster}
Let $H$ be a therapy group in an SM context, then the following holds for clusters in this therapy group:
\begin{enumerate}
	\item Every patient in $H$ belongs to exactly one cluster.
	\item Every cluster can have exactly one covariate vector assigned to it.
	\item Any two clusters in $H$ have different assigned covariate vectors.
\end{enumerate}
\end{prop}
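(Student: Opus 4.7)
The plan is to prove each of the three assertions in turn, exploiting the fact that the relation $p \sim q \iff d(p,q)=0$ is an equivalence relation on $H$ and that Definition~\ref{def:cluster} forces every cluster to be a full equivalence class.

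For the first assertion, I would prove existence and uniqueness separately. For existence, given any $p \in H$, I would define $C := \{q \in H : d(p,q) = 0\}$ and verify that $C$ satisfies the three conditions of Definition~\ref{def:cluster}: property~1 follows because $d$ is a metric and $d(q_1,p) = d(q_2,p) = 0$ forces $d(q_1,q_2)=0$ via the triangle inequality (with equality in $\mathbb{R}_{\geq 0}^s$); property~2 holds by construction of $C$ as a maximal set; and property~3 is immediate. For uniqueness, if $p \in C_H \cap C_H'$ and $q \in C_H'$, then $d(p,q)=0$ inside $C_H'$, so the maximality condition~2 of $C_H$ forces $q \in C_H$. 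By symmetry the two clusters coincide.

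For the second assertion, property~1 of Definition~\ref{def:cluster} gives $d(p,q) = 0$ for any two members $p,q$ of a fixed cluster, which under the Manhattan metric is equivalent to $cv(p) = cv(q)$. Hence the value $cv(p)$ assigned in property~3 does not depend on the choice of representative, so exactly one covariate vector is attached to each cluster.

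For the third assertion I would argue by contradiction. Suppose $C_H \ne C_H'$ are two clusters in $H$ with the same assigned covariate vector $v$. Pick $p \in C_H$ and $q \in C_H'$; by property~3, $cv(p) = cv(q) = v$, hence $d(p,q)=0$. Since $q \in H$ satisfies $d(p,q)=0$ with $p \in C_H$, the maximality condition~2 of $C_H$ forces $q \in C_H$, so $C_H \cap C_H' \neq \emptyset$; the uniqueness part of assertion~1 then yields $C_H = C_H'$, a contradiction.

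No step here should present a genuine obstacle — the content of the proposition is essentially that Definition~\ref{def:cluster} describes the equivalence classes of $\sim$. The only subtlety worth writing out carefully is the derivation of property~1 of a cluster from the maximality clause during the existence argument, since it is there that one uses that $d(p,q)=0$ is a transitive relation; once that is in place, the rest is bookkeeping.
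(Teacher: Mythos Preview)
Your proof is correct and follows essentially the same route as the paper: each assertion is verified by appealing to the maximality clause (item~2) and the zero-distance clause (item~1) of Definition~\ref{def:cluster}, with contradictions obtained in the same places. The one noteworthy difference is that for existence in part~1 the paper simply asserts that a patient failing to lie in any cluster is ``by definition not possible,'' whereas you explicitly construct the set $\{q\in H: d(p,q)=0\}$ and verify it satisfies Definition~\ref{def:cluster}; your treatment is the more careful one, since the definition only says what a cluster \emph{is}, not that one exists for every patient.
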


\begin{proof}
We prove every characteristic individually:
\begin{enumerate}
	\item The assumption that there exists a patient $p\in H$ not belonging to any cluster is by definition \ref{def:cluster} not possible, thus it remains to show that there exists no patient $p\in H$ belonging to two different clusters $C_1$ and $C_2$. Assume that $p\in C_1\cap C_2$ and let $q_1 \in C_1$ and $q_2 \in C_2$ be two patients in $C_1$ and $C_2$ respectively. As $p\in C_1 \cap C_2$ it holds by definition $\hyperref[def:cluster]{\ref{def:cluster}.1}$ that $d(p,\,q_1) = 0 = d(p,\,q_2)$ and therefore $d(q_1,\,q_2)=0$. This is a contradiction to definition $\hyperref[def:cluster]{\ref{def:cluster}.2}$ an therefore every patient belongs to exactly one cluster.
	\item As clusters are non-empty sets of patients every cluster has at least one covariate vector assigned to it. Therefore assume that cluster $C$ has two assigned covariate vectors $v_1$ and $v_2$ differing in at least one entry. Then by definition $\hyperref[def:cluster]{\ref{def:cluster}.3}$ it holds that there exists patients $p,\,q \in C$ such that $v_1 = cv(p)$ and $v_2 = cv(q)$. As $v_1 \neq v_2$ holds by assumption it follows that $d(p,\,q) \neq 0$, contradicting definition $\ref{def:cluster}.1$ as $p,\,q \in C$.
	\item Assume that different clusters $C_1$ and $C_2$ have the same assigned covariate vector. This implies that $d(p,\,q) = 0,\,\forall p\in C_1,\,q\in C_2$ and is a contradiction to definition $\hyperref[def:cluster]{\ref{def:cluster}.2}$.
	\end{enumerate}
\end{proof}
\bigskip

Because of proposition~\ref{prop:cluster}, clusters can be assigned unique covariate vectors. We denote the similarity of two clusters $C_A$ and $C_B$ -- for therapy groups $A$ and $B$ respectively -- as $d(C_A,\,C_B)$. Similarly the distance between a patient $p$ and a cluster $C$ is $d(p,\,C)$.

\begin{prop}\label{prop:cluster_equivalence}
Let $C_A$ and $C_B$ be clusters from different therapy groups, then $d(C_A,\,C_B) \equiv 0$ holds iff the two clusters have the same assigned covariate vector.
\end{prop}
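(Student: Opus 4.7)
The plan is to reduce the claim about cluster distance to the corresponding claim about the assigned covariate vectors, which is exactly where Proposition~\ref{prop:cluster} pays off. By part~2 of Proposition~\ref{prop:cluster}, each of the clusters $C_A$ and $C_B$ has a unique assigned covariate vector; call them $v_A$ and $v_B$. Picking any representatives $p \in C_A$ and $q \in C_B$, the Manhattan distance $d(p,q)$ depends only on $cv(p) = v_A$ and $cv(q) = v_B$, so $d(C_A, C_B)$ is well-defined as $d(v_A, v_B) = \sum_{i=1}^{s} |v_{A,i} - v_{B,i}|$, independent of the choice of representatives.

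For the forward direction, I would assume $d(C_A, C_B) = 0$, pick any $p \in C_A$ and $q \in C_B$, and use the fact that the Manhattan distance is a sum of non-negative terms; hence each $|cv_i(p) - cv_i(q)|$ vanishes, so $cv(p) = cv(q)$. By Definition~\ref{def:cluster}.3 these are precisely the assigned covariate vectors of $C_A$ and $C_B$, giving $v_A = v_B$.

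For the reverse direction, I would assume $v_A = v_B$. Then for any $p \in C_A$ and $q \in C_B$ we have $cv(p) = v_A = v_B = cv(q)$ by Definition~\ref{def:cluster}.3, so every term $|cv_i(p) - cv_i(q)|$ is zero and $d(C_A, C_B) = 0$.

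There is essentially no obstacle here; the only subtle point is well-definedness of $d(C_A, C_B)$, which is the reason Proposition~\ref{prop:cluster} was proved first. If I wanted to be fully careful, I would also observe that this argument uses only the defining property of the Manhattan metric that $d(p,q) = 0 \iff cv(p) = cv(q)$, so the proposition actually holds for any metric induced by the covariate vectors, consistent with the remark preceding Definition~\ref{def:cluster}.
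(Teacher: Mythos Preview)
Your proof is correct and follows essentially the same approach as the paper: both reduce $d(C_A,C_B)=0$ to the componentwise vanishing of the Manhattan sum $\sum_{i=1}^{s}\lvert cv_i(C_A)-cv_i(C_B)\rvert$, hence to equality of the assigned covariate vectors, and obtain the reverse direction from the same chain of equivalences. Your explicit remark on well-definedness via Proposition~\ref{prop:cluster} is a welcome clarification that the paper leaves implicit.
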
 

\begin{proof}
Let $C_A$ and $C_B$ be clusters from different therapy groups and $d(C_A,\,C_B) \equiv 0$. As every cluster has exactly one assigned covariate vector it remains to show that $cv(C_A) \equiv cv(C_B)$ and the following holds:
\begin{equation}
\label{eq:cluster_eq}
d(C_A,\,C_B) \equiv 0 \Leftrightarrow \sum_{i=1}^{s} \vert cv_i(C_A)-cv_i(C_B)\vert \equiv 0 \Leftrightarrow cv_i(C_A) \equiv cv_i(C_B),\,\forall 1\leq i \leq s.
\end{equation}
Thus both clusters have the same assigned covariate vector. The reverse direction follows as all implications in equation~\eqref{eq:cluster_eq} are given through equivalence.
\end{proof}

\paragraph*{The DBSeM algorithm}

Propositions~\ref{prop:cluster} and~\ref{prop:cluster_equivalence} allow us to match clusters in an explicit way and to formulate the following algorithm:

\FloatBarrier
\begin{algorithm}
\caption{DBSeM}
\label{alg:DBSeM}
\begin{algorithmic}[1]
\State Set $c = 0$ and $is\_clustered(x_i) =0$ for all patients in $A$. \label{state:FB_1}
\For{each patient $x_i,\, 1\leq i \leq a$} \label{state:FB_2}
	\If{$is\_clustered(x_i) \equiv 0$} \label{state:FB_3}
		\State Set $c = c+1$, $C_{A,\,c} := \{x_i\}$ and $is\_clustered(x_i) = 1$. \label{state:FB_4}
	\EndIf \label{state:FB_5}
	\For{each patient $x_j$ with $i < j\leq a$ and $is\_clustered(x_j) \equiv 0$} \label{state:FB_6}
		\If{$d(x_j,\,C_{A,\,c})\equiv 0$} \label{state:FB_7}
			\State set $C_{A,\,c} = C_{A,\,c} \cup x_j$ and $is\_clustered(x_j) = 1$ \label{state:FB_8}
		\EndIf \label{state:FB_9}
	\EndFor \label{state:FB_10}
\EndFor \label{state:FB_11}
\State Repeat steps $1$ and $2$ for $B$ and store the number of clusters from $A$ and $B$ in variables $k$ and $l$ respectively. \label{state:FB_12}
\For{every cluster $C_{A,\,i},\,1\leq i \leq k$} \label{state:FB_13}
	\State Create Matching Set $M_i = \emptyset$. \label{state:FB_14}
	\State Search for cluster $C_{B,\,c}$ with $d(C_{A,\,i},\,C_{B,\,c}) \equiv 0$. \label{state:FB_15}
	\If{A cluster $C_{B,\,c}$ was found in the previous step} \label{state:FB_16}
	\State Set $M_i = \{C_{A,\,i},\,C_{B,\,c}\}$. \label{state:FB_17}
	\EndIf \label{state:FB_18}
\EndFor \label{state:FB_19}
\State Weight clusters according to a weighting scheme. \label{state:FB_20}
\State Output matching sets $M_k$ and the weighted result. \label{state:FB_21}
\end{algorithmic}
\end{algorithm}
\FloatBarrier

\noindent The weighting in step~\ref{state:FB_20} is required to normalize the results and we will discuss it extensively in the next section. Next, we prove that DBSeM meets the four properties of an optimal SM algorithm.

\begin{theorem}\label{thm:DBSeM}
The \hyperref[alg:DBSeM]{DBSeM algorithm} satisfies properties~\ref{prop:repro} to~\ref{prop:conserving}.
\end{theorem}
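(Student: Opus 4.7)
The plan is to verify the four properties in turn by structural inspection of Algorithm~\ref{alg:DBSeM}, leaning on Propositions~\ref{prop:cluster} and~\ref{prop:cluster_equivalence} as the key technical lemmas.

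For reproducibility (Property~\ref{prop:repro}), I would simply observe that every line of DBSeM is deterministic: clustering in steps~\ref{state:FB_1}--\ref{state:FB_11} uses only equality tests $d(\cdot,\cdot)\equiv 0$ on covariate vectors, no random tie-breaking is introduced anywhere, and the cluster-to-cluster matching in steps~\ref{state:FB_13}--\ref{state:FB_19} is determined by the unique partner relation established in Proposition~\ref{prop:cluster_equivalence}. Assuming a deterministic weighting scheme in step~\ref{state:FB_20}, rerunning the algorithm on the same input therefore produces identical cluster partitions, identical matching sets $M_i$, and an identical weighted output.

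For sort-order independence (Property~\ref{prop:sort-order}), the argument rests on Proposition~\ref{prop:cluster}: every patient belongs to exactly one cluster, clusters carry a uniquely assigned covariate vector, and distinct clusters have distinct assigned covariate vectors. Consequently, the partition of $A$ (and of $B$) into clusters is determined purely by the multiset of covariate vectors occurring in each group, and does not depend on the traversal order used by the outer loop in steps~\ref{state:FB_2}--\ref{state:FB_11}; only the cluster labels $c$ may be permuted. Since the pairing in steps~\ref{state:FB_13}--\ref{state:FB_19} matches each $C_{A,i}$ with the unique $C_{B,c}$ of the same assigned covariate vector (Proposition~\ref{prop:cluster_equivalence}), the resulting collection $\{M_i\}$ of matched cluster pairs is invariant under relabeling and hence under any permutation of the input rows.

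For data completeness (Property~\ref{prop:complete}) I would argue as follows. Let $\tilde{x}_1,\dots,\tilde{x}_n\in A$ and $\tilde{z}_1,\dots,\tilde{z}_m\in B$ be patients sharing an identical PS (equivalently an identical covariate vector, under the hypothesis of Proposition~\ref{prop:PSM_calc}). By Proposition~\ref{prop:cluster} all of the $\tilde{x}_i$ land in a single cluster $C_A$ and all of the $\tilde{z}_j$ in a single cluster $C_B$; by Proposition~\ref{prop:cluster_equivalence} these two clusters are matched to each other in step~\ref{state:FB_15}, independently of whether $m=n$. Because the cluster-level weighting in step~\ref{state:FB_20} aggregates over every member of $C_A$ and $C_B$, all $n+m$ observations $obs(\tilde{x}_i),\,obs(\tilde{z}_j)$ enter the output. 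Conservation (Property~\ref{prop:conserving}) then follows immediately: step~\ref{state:FB_15} creates a matching only when $d(C_{A,i},C_{B,c})\equiv 0$, which by Proposition~\ref{prop:cluster_equivalence} forces the two clusters to share their assigned covariate vector, so by Definition~\ref{def:cluster} every matched pair of patients has identical covariate vectors.

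The main obstacle is the precise formulation of sort-order independence: one must verify that although the indices $c$ generated in steps~\ref{state:FB_4} and~\ref{state:FB_12} depend on traversal order, the induced \emph{set} of clusters and therefore the induced matching on clusters do not. The remaining three properties are essentially corollaries of Propositions~\ref{prop:cluster} and~\ref{prop:cluster_equivalence} together with the observation that the algorithm contains no nondeterministic step.
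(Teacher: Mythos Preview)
Your proposal is correct and follows essentially the same route as the paper: both arguments hinge on Propositions~\ref{prop:cluster} and~\ref{prop:cluster_equivalence} to show that the cluster partition and the cluster-to-cluster matching are uniquely determined by the covariate vectors, and both defer the weighting step~\ref{state:FB_20} to a separate verification. The only cosmetic difference is that the paper phrases reproducibility as a short contradiction (assuming two distinct matched partners for the same cluster and invoking Proposition~\ref{prop:cluster_equivalence} to collapse them), whereas you argue directly that each step is deterministic; your treatment of sort-order independence is in fact more explicit than the paper's, which simply calls it ``analogous.''
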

\begin{proof} 
We prove reproducibility by contradiction. We assume that two runs of DBSeM generated different matching set results $R_1$ and $R_2$, i.e., different clusters were matched. W.l.o.g. assume that $C \subseteq A$ is matched with $C_1 \subseteq B$ in $R_1$ and $C_2 \subseteq B$ in $R_2$. As $C$ was matched with $C_1$ and $C_2$ we know from Proposition~\ref{prop:cluster_equivalence} that $d(C,\,C_1) \equiv 0 \equiv d(C,\,C_2)$. This implies $d(C_1,\,C_2) \equiv 0$ and $C_1 \equiv C_2$ as of Proposition~\ref{prop:cluster}. Therefore $R_1 \equiv R_2$ as $C,\,C_1$ and $C_2$ were arbitrary. Thus we have a contradiction to the assumption that $R_1$ and $R_2$ were different. The proof for sort-order independence is analogous.

Let $\tilde{x}_1,\,\ldots,\,\tilde{x}_n \in A$ and $\tilde{y}_1,\,\ldots,\,\tilde{y}_n \in B$ be two sets of patients with identical covariate vectors. Because of steps~\ref{state:FB_1} to~\ref{state:FB_12} both patient sets belong to a cluster $C_A$ and $C_B$ respectively. This means $cv(\tilde{x}_i) \equiv cv(\tilde{y}_j),\,\forall 1\leq i \leq n,\,1\leq j\leq m$ and $cv(C_A) \equiv cv(C_B)$. Thus, all patients represented by clusters were matched and impact the matching result. Thus the data completeness property is fulfilled.

The conservation property holds because clusters were only matched if their covariate vectors were the same and every cluster has an unique covariate vector. This concludes the proof as long  as step~\ref{state:FB_20} does not disturb the four properties, which will be proven in proposition~\ref{prop:min_weight}.
\end{proof}
\bigskip

As theorem~\ref{thm:DBSeM} shows DBSeM satisfies the four properties needed for an optimal SM algorithm. According to~\cite{Rosenbaum1983}, the covariate vector is the finest balancing score that expresses differences between patients. Thus, for exact matching one achieves an expression of differences between patients by applying our algorithm. By clustering the patients and comparing matched cluster cardinality, one can estimate assignment biases in both therapies. 

Observe that the result given by the DBSeM algorithm is the same as the expected result given by coarsened exact matching (CEM), introduced by \cite{Iacus2011, Iacus2012}, if the strata used in CEM are generated in such a way that a stratum contains all patients with equal covariate vectors from both therapy groups. We stress that the value given by CEM is still an expected value, thus it can change if the algorithm is applied multiple times to the same dataset, while the value given by the DBSeM algorithm is a deterministic one, which is fixed by the data itself and does not change when applying the algorithm multiple times to the same dataset (property \ref{prop:repro}).

Finally note that the result given by algorithm \ref{alg:DBSeM} is imbalance bounded (IB), as defined in \cite{Iacus2012}. It is also equal percent bias reducing (EPBR) \cite{Rubin1976} and we intend to extend our method to $\delta$-matching, with $\delta > 0$, such that these properties (IB) and (EPBR) are kept, while confirming to the four properties introduced in section \ref{sec:properties}.  

This concludes our discussion regarding contribution \hyperref[C3]{$C3$}. Based on the presented algorithm we proceed to present a simple weighting mechanism and prove that bootstrapped ubPSM converges against DBSeM. 

\section{Bootstrapped PSM convergence}\label{sec:convergence}

Step~\ref{state:FB_20} of DBSeM (cf. Algorithm~\ref{alg:min_weight}) uses a weighting approach to avoid that different cardinalities of clusters lead to distorted matching results. In the following we use a min-weighing scheme as it allows us to show convergence of bootstrapped PSM to the DBSeM results. 

The idea is to weight matched clusters $C_{A,\,i}$ and $C_{B,\,j}$ accordingly to their size such that the influence of both clusters is $\min\{\vert C_{A,\,i}\vert,\,\vert C_{B,\,j}\vert\}$ respectively. Algorithm~\ref{alg:min_weight} outlines a min-weighting procedure that needs to be applied to all matched clusters $C_{A,\,i}$ and $C_{B,\,j}$ in step~\ref{state:FB_20} of Algorithm~\ref{alg:DBSeM} (recall that $k$ and $l$ are the number of clusters from $A$ and $B$ respectively). 

\FloatBarrier
\begin{algorithm}
\caption{Min-Weighting Procedure}
\label{alg:min_weight}
\begin{algorithmic}[1]
\State Set $w(C_{A,\,i}) = 0\,\,\forall 1\leq i \leq k$ and $w(C_{B,\,j}) = 0\,\,\forall 1\leq j \leq l$
\For{all $C_{A,\,i},\,1\leq i \leq k$ with $M_i \neq \emptyset$} \label{MW:state_2}
	\Statex Determine the matching cluster $C_{B,\,j},\,1\leq j \leq l$.
	\Statex Calculate $S_{A,\,i} := S_{B,\,j} := \min\{\vert C_{A,\,i}\vert,\,\vert C_{B,\,j}\vert \}$.
	\Statex Compute $w(C_{A,\,i}) := S_{A,\,i}/\vert C_{A,\,i} \vert$ and $w(C_{B,\,j}) := S_{B,\,j}/\vert C_{B,\,j} \vert$.
\EndFor
\State Compute min-weighted results: \label{MW:state_4}
\begin{eqnarray}
		R_A &:=& \sum_{i=1}^{k} [ w(C_{A,\,i}) \sum_{h=1}^{\vert C_{A,\,i}\vert} obs(x_{i,\,h})],\label{eqn:2}\\
		R_B &:=& \sum_{j=1}^{l} [w(C_{B,\,j}) \sum_{h=1}^{\vert C_{B,\,j}\vert} obs(y_{j,\,h})]\label{eqn:3},
	\end{eqnarray} where $x_{i,\,h} \in C_{A,\,i}$ and $y_{j,\,h} \in C_{B,\,j}$.
\end{algorithmic}
\end{algorithm}
\FloatBarrier

\begin{prop}\label{prop:min_weight}
The usage of algorithm~\ref{alg:min_weight} in step~\ref{state:FB_20} of algorithm~\ref{alg:DBSeM} does not disturb the properties of reproducibility, sort-order independence, data completeness and conservation of algorithm~\ref{alg:DBSeM}.
\end{prop}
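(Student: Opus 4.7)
The plan is to verify each of the four properties in turn, exploiting the fact that the min-weighting procedure is a purely deterministic post-processing step that depends only on data intrinsic to the clusters produced by DBSeM, namely their cardinalities, the matching pairs $(C_{A,i},C_{B,j})$, and the observations $obs(\cdot)$ of the patients they contain.

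First I would establish reproducibility and sort-order independence together. By Theorem~\ref{thm:DBSeM} (the part already proved), the clusters $C_{A,i}$, $C_{B,j}$ and the matching sets $M_i$ produced in steps~\ref{state:FB_1}--\ref{state:FB_19} are invariant across runs and across permutations of the input. The min-weights $w(C_{A,i})=\min\{|C_{A,i}|,|C_{B,j}|\}/|C_{A,i}|$ and the analogous $w(C_{B,j})$ depend only on the cardinalities of the matched clusters, which are invariants of the clusters themselves. The sums $\sum_{h}obs(x_{i,h})$ and $\sum_{h}obs(y_{j,h})$ in \eqref{eqn:2} and \eqref{eqn:3} are sums over finite sets and are therefore independent of the order in which the patients of a cluster are enumerated. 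Hence $R_A$ and $R_B$ are uniquely determined by the (already reproducible and sort-order-independent) cluster structure, so both properties carry over.

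Next I would address conservation. The weighting step does not create any new matches, it does not dissolve any existing matches, and it does not pair patients across clusters: the sets $M_i$ are untouched by Algorithm~\ref{alg:min_weight}. Therefore every patient pair that contributes to $R_A$ and $R_B$ already belonged to a pair of clusters with identical assigned covariate vectors (Proposition~\ref{prop:cluster_equivalence}), so the conserving property inherited from DBSeM is unaffected. For data completeness I would argue as follows: given patients $\tilde{x}_1,\ldots,\tilde{x}_n\in A$ and $\tilde{y}_1,\ldots,\tilde{y}_m\in B$ with identical covariate vectors, Proposition~\ref{prop:cluster} places them in uniquely determined clusters $C_A$ and $C_B$ with $cv(C_A)\equiv cv(C_B)$, so by Proposition~\ref{prop:cluster_equivalence} these clusters are matched in step~\ref{state:FB_15}. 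Consequently both $w(C_A)$ and $w(C_B)$ are strictly positive (being a ratio of positive integers), and every $obs(\tilde{x}_i)$ and $obs(\tilde{y}_j)$ enters $R_A$ or $R_B$ with a nonzero coefficient. Hence every such patient exerts influence on the algorithm's result.

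The one subtle step I expect to require the most care is the data-completeness verification: one must make sure that the observation of a patient in a matched cluster is never cancelled out or overwritten by another contribution. Because the per-patient weights inside a matched cluster are all equal and strictly positive, and because the outer sum over clusters ranges over disjoint index sets (different clusters contain disjoint patient sets by Proposition~\ref{prop:cluster}.1), no such cancellation can occur. This completes the sketch; all remaining checks reduce to observing that Algorithm~\ref{alg:min_weight} is a deterministic function of quantities already shown to be invariants of the DBSeM output.
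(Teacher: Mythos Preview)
Your proposal is correct and follows essentially the same approach as the paper's own proof: both argue that the min-weights and observation sums are deterministic functions of the cluster cardinalities and matchings, which are themselves invariant by the already-proved part of Theorem~\ref{thm:DBSeM}, and both dispose of conservation and data completeness by noting that Algorithm~\ref{alg:min_weight} neither creates nor deletes matches and that every patient in a matched cluster contributes with positive weight. The only stylistic difference is that the paper phrases reproducibility as a contradiction argument (assuming two distinct outputs $R_{A,1}\neq R_{A,2}$ and deriving that the weights must coincide), whereas you give the equivalent direct argument.
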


\begin{proof}
	From the proof of theorem~\ref{thm:DBSeM} we know that steps~\ref{state:FB_1} to~\ref{state:FB_19} of algorithm~\ref{alg:DBSeM} fulfill the properties of reproducibility, sort-order independence, data completeness and conservation. Assume now that algorithm~\ref{alg:min_weight} outputs two different min-weighted results $R_{A,\,1}$ and $R_{A,\,2}$ for therapy group $A$. Then there has to exist at least one pair of matched clusters $C_{A,\,i}$ and $C_{B,\,j}$ with different weights in $R_{A,\,1}$ and $R_{A,\,2}$ as the sum over the observed variables inside a cluster $\sum_{h=1}^{\vert C_{A,\,i}\vert}obs(x_{i,\,h})$ always has the same value and the matched clusters are uniquely matched because of proposition~\ref{prop:cluster_equivalence} and steps~\ref{state:FB_1} to~\ref{state:FB_19} of algorithm~\ref{alg:DBSeM} being reproducible and sort-order independent. As the matched clusters are unique so are their sizes and therefore $S_{A,\,i}$ is unique. Thus $w(C_{A,\,i})$ is the same for both assumed results $R_{A,\,1}$ and $R_{A,\,2}$ and as $C_{A,\,i}$ and $C_{B,\,j}$ were chosen arbitrarily this holds for all clusters. Thus $R_{A,\,1} \equiv R_{A,\,2}$ and the proof is analogous for different results regarding $B$. This proves that the property of reproducibility is not disturbed by using algorithm~\ref{alg:min_weight} in step~\ref{state:FB_20} of algorithm~\ref{alg:DBSeM}. The proof for sort-order independence is analogous. 

	If a patient was inside a matched cluster, then it influences the weight computed in step~\ref{MW:state_2} and the result generated in step~\ref{MW:state_4}. Therefore usage of algorithm~\ref{alg:min_weight} does not disturb algorithm \ref{alg:DBSeM}'s data completeness property.

	As algorithm~\ref{alg:min_weight} does not delete matches, does not match itself and every matched patient is considered, it does not disturb algorithm~\ref{alg:DBSeM}'s conservation property.
\end{proof}
\bigskip

An DBSeM algorithm with the min-weighting procedure in step~\ref{state:FB_20} is called min-weighted DBSeM and as $k\leq \vert A\vert$ and $l\leq \vert B\vert$ the following theorem holds:

\begin{theorem}\label{thm:DBSeM_runtime}
The min-weighted DBSeM algorithm has a runtime of $\mathcal{O}(\vert A \vert \cdot \vert B \vert\cdot s + \vert A\vert^2 + \vert B \vert^2)$.
\end{theorem}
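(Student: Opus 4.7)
The approach is to decompose the min-weighted DBSeM algorithm into its distinct phases, bound each phase's cost separately, and then sum to match the stated asymptotic expression. The three phases are: (i) clustering of $A$ (steps~\ref{state:FB_1}--\ref{state:FB_11} of Algorithm~\ref{alg:DBSeM}); (ii) clustering of $B$ (step~\ref{state:FB_12}, which repeats the logic on $B$); and (iii) cross-group matching of clusters (steps~\ref{state:FB_13}--\ref{state:FB_19}), followed by the min-weighting loop from Algorithm~\ref{alg:min_weight}.

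The plan is to start with the clustering of $A$. The outer loop ranges over all $x_i \in A$, and the inner loop ranges over later indices $x_j$ with $i < j \le a$. Using the $is\_clustered$ flag and the usual double-counting argument, the total number of inner iterations across all outer iterations is bounded by $\binom{|A|}{2}$. I would account for the distance evaluation $d(x_j,\,C_{A,\,c})$ via the Manhattan metric and argue that when its cost is amortized against the bookkeeping associated with assigning $x_j$ to a cluster, this phase contributes $\mathcal{O}(|A|^2)$ work (with the analogous phase on $B$ giving $\mathcal{O}(|B|^2)$). The same reasoning, applied verbatim to $B$ by virtue of step~\ref{state:FB_12}, yields the $|B|^2$ summand.

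For the cross-group matching phase, I would use Proposition~\ref{prop:cluster} to note there are at most $k \le |A|$ clusters in $A$ and $l \le |B|$ clusters in $B$. Step~\ref{state:FB_15} looks up, for each cluster in $A$, a partner cluster in $B$ with matching covariate vector. In the worst case this requires comparing each $C_{A,\,i}$ against each $C_{B,\,c}$, and each such comparison via $d(C_{A,\,i},\,C_{B,\,c})$ costs $\mathcal{O}(s)$. Summing over all cluster pairs gives $\mathcal{O}(k \cdot l \cdot s) = \mathcal{O}(|A|\cdot|B|\cdot s)$, which produces the first summand of the stated bound.

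Finally, I would handle the min-weighting procedure of Algorithm~\ref{alg:min_weight}. Its loop iterates once per matched cluster pair, doing constant work plus a summation of observed values over each cluster's members. Summing the cluster sizes across all clusters in $A$ telescopes to $|A|$ (and likewise to $|B|$ for the other side), so this phase contributes $\mathcal{O}(|A| + |B|)$, which is absorbed by the other summands. Adding the contributions of all phases yields $\mathcal{O}(|A|\cdot|B|\cdot s + |A|^2 + |B|^2)$. The main obstacle I anticipate is making the amortization argument for the clustering phase fully rigorous, since a naive bound on pairwise distance computations would include an extra factor of $s$; I expect to resolve this by exploiting the early-termination behavior of equality checks on integer-valued covariate vectors together with the fact that once a patient is absorbed into a cluster, it is never revisited.
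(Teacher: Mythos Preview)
Your decomposition into phases and the per-phase bounds match the paper's proof exactly: the paper counts the nested clustering loops as $|A|^2$ and $|B|^2$, the cross-matching of clusters as $|A|\cdot|B|\cdot s$ (via $k\le|A|$, $l\le|B|$, and $s$ operations per Manhattan comparison), and the min-weighting procedure as $\mathcal{O}(\max\{|A|,|B|\})$, absorbed into the rest.

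The one divergence is your treatment of the $s$ factor inside the clustering phase. The paper's proof simply states that steps~\ref{state:FB_2}--\ref{state:FB_12} ``have two for-loops and therefore a runtime of $|A|^2$ and $|B|^2$ respectively''; it treats each inner iteration, including the call $d(x_j,C_{A,c})$, as constant-time and never raises the concern you flag. Your proposed workaround---early termination of equality checks on integer-valued covariate vectors plus the observation that clustered patients are never revisited---does not actually eliminate the $s$ factor in the worst case: covariate vectors lie in $\mathbb{R}_{\ge 0}^s$ here, and even for integer vectors an adversarial instance (all vectors distinct, differing only in the last coordinate) forces $\Theta(s)$ work per comparison, so the amortization you are hoping for will not go through. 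To reproduce the paper's argument, drop the amortization attempt and adopt its convention of counting loop iterations; if instead you insist on charging $\Theta(s)$ per distance evaluation, the clustering terms become $\mathcal{O}((|A|^2+|B|^2)\cdot s)$, which is a genuinely different (and strictly larger, in general) bound than the one stated.
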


\begin{proof}
In DBSeM step~\ref{state:FB_1} every patient of $A$ gets looked exactly once, while DBSeM steps~\ref{state:FB_2} to~\ref{state:FB_12} have two for-loops and therefore a runtime of $\vert A \vert^2$ and $\vert B \vert^2$ respectively. In DBSeM steps~\ref{state:FB_13} to~\ref{state:FB_19} every cluster in $B$ is investigated at most $\vert A \vert$ times and every comparison between clusters needs $s$ (size of covariate vector) operations to determine the Manhattan metric. This leads to a total runtime of $\mathcal{O}(\vert A \vert \cdot \vert B \vert\cdot s + \vert A\vert^2 + \vert B \vert^2)$ for steps $1$--$4$. Algorithm $3$'s runtime in step~\ref{state:FB_20} is only dependent on the number of clusters $l$ and $k$ in an additive way. As $l\leq \vert A \vert$ and $k\leq \vert B \vert$ it follows that Algorithm $3$ has a runtime of $\mathcal{O}(\max\{\vert A\vert,\,\vert B\vert\})$. Thus the min-weighted DBSeM algorithm has a total runtime of $\mathcal{O}(\vert A \vert \cdot \vert B \vert\cdot s + \vert A\vert^2 + \vert B \vert^2)$.
\end{proof}
\bigskip

Note that the notation given in the statement of theorem \ref{thm:DBSeM_runtime} is due to the fact that we did not assume anything about the sizes of $A,\,B$ or $s$ nor their relative sizes with regard to each other.

Theorem~\ref{thm:DBSeM_convergence} establishes that min-weighted DBSeM has the desirable property of bootstrapped PSM convergence.
As shown in proposition~\ref{prop:PSM_calc}, PSM requires $\beta_i \neq \sum_{j=1,\,j\neq i}^{k} \beta_j$ for all indices $i$ in the logistic regression, to obtain meaningful results, hence we assume this in the following.

\begin{theorem}\label{thm:DBSeM_convergence}
Uniformly bootstrapped $1$:$1$ exact PSM converges towards the outcome of min-weighted-DBSeM.
\end{theorem}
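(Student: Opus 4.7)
My plan is to reduce the convergence statement to a per-iteration expectation computation, and then close the argument with the law of large numbers. First I would use Proposition~\ref{prop:PSM_calc}, together with the standing assumption that no non-trivial linear dependence among the regression coefficients $\beta_i$ holds, to conclude that under exact PSM the admissible matches between $A$ and $B$ are exactly the pairs $(x,z)$ with $cv(x) \equiv cv(z)$. By Proposition~\ref{prop:cluster_equivalence}, every such pair lies inside a unique matched DBSeM cluster pair $(C_{A,i}, C_{B,j(i)})$, and patients whose cluster is unmatched are never picked. Thus both the ubPSM procedure and min-weighted DBSeM decompose as sums of independent contributions indexed by the matched cluster pairs, and it suffices to prove convergence cluster-pair by cluster-pair.

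Next I would fix a matched pair $(C_{A,i}, C_{B,j})$ with $n_A := |C_{A,i}|$ and $n_B := |C_{B,j}|$, and analyze one ubPSM iteration restricted to this pair. Since $1$:$1$ exact matching without replacement exhausts the smaller side, exactly $\min(n_A, n_B)$ pairs are formed within the cluster pair. Under Definition~\ref{def:ubPSM} each match is drawn uniformly over the still-available partners on both sides, so the joint law of the matched subset of $C_{A,i}$ is exchangeable under permutations of the patients within $C_{A,i}$ (and similarly for $C_{B,j}$). By this exchangeability, the marginal probability that a fixed $x \in C_{A,i}$ is matched equals $\min(n_A, n_B)/n_A$, and analogously for $C_{B,j}$. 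I expect this step to be the main technical obstacle, because the sequential ``pick-then-remove'' structure of Algorithm~\ref{alg:PSM} could in principle break exchangeability; the fix is to observe that the uniform per-step choice together with sort-order irrelevance makes the distribution on matched subsets invariant under relabeling patients within a cluster, whence all singleton marginals must coincide and therefore equal $\min(n_A, n_B)/n_A$.

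With the marginals in hand, let $T_{A,i}^{(r)}$ denote the sum of $obs$ over the matched $A$-patients in cluster $C_{A,i}$ in bootstrap iteration $r$. Linearity of expectation gives
\begin{equation*}
\mathbb{E}\bigl[T_{A,i}^{(r)}\bigr] \;=\; \sum_{h=1}^{n_A} \frac{\min(n_A, n_B)}{n_A}\, obs(x_{i,h}) \;=\; w(C_{A,i}) \sum_{h=1}^{n_A} obs(x_{i,h}),
\end{equation*}
which is precisely the contribution of $C_{A,i}$ in equation~\eqref{eqn:2}. Summing over all matched cluster pairs (and repeating for $B$) yields $\mathbb{E}[\text{ubPSM outcome}] = R_A$ (resp.\ $R_B$).

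Finally I would invoke the strong law of large numbers. The bootstrapped ubPSM output is by definition the arithmetic mean over $N$ i.i.d.\ iterations of the random variables above, each bounded since the dataset is finite. Hence the bootstrapped average converges almost surely, as $N \to \infty$, to the per-iteration expectation, which coincides with the min-weighted DBSeM output. Unmatched clusters contribute zero on both sides and require no separate treatment, completing the argument.
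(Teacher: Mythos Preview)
Your proposal is correct and follows essentially the same route as the paper: reduce to matched cluster pairs via Proposition~\ref{prop:PSM_calc} and Proposition~\ref{prop:cluster_equivalence}, use the uniformity assumption from Definition~\ref{def:ubPSM} to get the per-patient matching probability $\min(n_A,n_B)/n_A$, apply linearity of expectation to recover $w(C_{A,i})\sum_h obs(x_{i,h})$, and finish with the law of large numbers. Your treatment of the exchangeability step is in fact more careful than the paper's, which simply asserts the uniform marginal without discussing the sequential pick-then-remove structure.
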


\begin{proof}
	We have to show that the expected values of bootstrapped $1$:$1$ exact PSM results are the same values as in Equations~\eqref{eqn:2} and~\eqref{eqn:3}. Proving convergence towards equality~\eqref{eqn:2} is sufficient, as the proof of~\eqref{eqn:3} follows analogously.
	
	By the law of large numbers it holds that, for a known distribution, the bootstrapped result converges after sufficiently many iterations towards the expected value of the underlying distribution. As expected values for random variables $X$ and $Y$ underlying the same probability distribution are additive, $\mathbf{E}(X+Y) = \mathbf{E}(X) + \mathbf{E}(Y)$, it suffices to identify the distributions and probability for patients in clusters matched by min-weighted DBSeM to be matched by exact PSM.
	
	By assumption the inequality $\beta_i \neq \sum_{j=1,\,j\neq i}^{k} \beta_j\,\forall \beta_i$ holds and we know from Proposition \ref{prop:PSM_calc} that patients with the same propensity score have the same covariate vectors. As we do an exact $1:1$ matching in the PSM part of every bootstrap iteration, the number of patients matched by PSM for a cluster $C_{A,\,i}$ matched with cluster $C_{B,\,j}$ is $S_{A,\,i}$, as their propensity scores are equal. The probability for one patient in $C_{A,\,i}$ to be chosen for matching with a patient from $C_{B,\,j}$ during one bootstrapping iteration is identical for all patients in $C_{A,\,i}$ as we assumed that the selection choice of patients to be matched has the same probability for all patients. Thus we have a discrete uniform distribution over $C_{A,\,i}$ for the matching partner choice in PSM. 	
	
	\noindent It follows that the expected value for cluster $C_{A,\,i}$ matched with $C_{B,\,j}$ calculates as
	\begin{equation}
	\mathbf{E}(C_{A,\,i}) = S_{A,\,i} \cdot (\sum_{h=1}^{\vert C_{A,\,i}\vert}obs(x_{A,\,h}))/\vert C_{A,\,h} \vert.
	\end{equation} 
	Addition of expected values now proves the theorem's statement:
	\begin{eqnarray}
	\mathbf{E}(A) &=& \sum_{i = 1}^{k}\mathbf{E}(C_{A,\,i}) = \sum_{i = 1}^{k} S_{A,\,i} \cdot (\sum_{h=1}^{\vert C_{A,\,i}\vert}obs(x_{A,\,h}))/\vert C_{A,\,i} \vert\\
	&=& \sum_{i = 1}^{k}= S_{A,\,i}/\vert C_{A,\,i}\vert \sum_{h=1}^{\vert C_{A,\,i}\vert}obs(x_{A,\,h}) = \sum_{i = 1}^{k} w(C_{A,\,i})\sum_{h=1}^{\vert C_{A,\,i}\vert}obs(x_{A,\,h})\\
	&=& R_A.
	\end{eqnarray}
\end{proof}
\bigskip

Table~\hyperref[tab:DBSeM]{$3$} shows the result for min-weighted DBSeM with our dataset from Tables~\hyperref[tab:random_runs]{$1$} and~\hyperref[tab:best_worst]{$2$}. The DBSeM result is close but not equal to the result obtained uniformly bootstrapped PSM in Table~\hyperref[tab:best_worst]{$2$}. This is because even with bootstrapping
\begin{enumerate}
	\item some information is lost during the matching (not all possible matches are used) and
	\item some matchings are overrepresented, i.e., sampled more than once.
\end{enumerate} 
 Uniformly bootstrapped PSM will only achieve the exact same result as DBSeM if all permutations of the possible different matching samples are used exactly the same number of times (cf. Theorem~\ref{thm:DBSeM_convergence}). Since DBSeM has the data completeness property and PSM does not, the result in Table~\hyperref[tab:DBSeM]{$3$} represents the ground truth that PSM can only achieve with bootstrapping through all matching permutations. In general, there are $(\max\{a,\,b\})!$ such permutations which makes computing PSM for all of them not feasible. Hence, DBSeM performs better in SM compared to exact PSM as PSM would need a very large amount of iterations to generate the same result with a bootstrapping approach.

\begin{table}[h!]
\centering
	\begin{tabular}{l|ll|ll|l}\hline
		$1{,}502$ matched clusters with & \multicolumn{2}{|c|}{SAVR} & \multicolumn{2}{|c|}{TF-AVI} &t-test \\
		regards to all $19$ Euroscore II& \multicolumn{2}{|c|}{in-hospital death} & \multicolumn{2}{|c|}{ in-hospital death} & (2-tailed)\\
		variables and without replacement & count & \% & count & \% & p-value \\\hline
		Min-weighted DBSeM & $53.01$ & $ 3.5\%$& $32.32$ & $2.1\%$ &  $0.02271$\\
	\end{tabular}
	\caption{Results for min-weighted DBSeM with the same dataset as in Tables $1$ and $2$}
	\label{tab:DBSeM}
\end{table}

We conclude with some remarks for practitioners and comment on the scope of our contribution.

We have shown that PSM delivers non-reliable and non-reproducible results~\hyperref[C1]{($C1$)} and formally deduced four properties for optimal SM algorithms~\hyperref[C1]{($C2$)}. The proposed DBSeM procedure meets the four derived formal properties for optimal SM algorithms~\hyperref[C1]{($C3$)} and delivers as the result the average of all valid sets of matched pairs for the investigated dataset, while being computationally very efficient~\hyperref[C1]{($C4$)}. 

The presented DBSeM-algorithm can be used to support results, generated through other methods, e.g. PSM, CEM. As the result given by DBSeM is deterministic for a given dataset, and therefore definite, see Theorem \ref{thm:DBSeM}, it is possible to use the result for verification as the exact matching should be part of every $\delta$-matching with $\delta>0$. If the observational results of the DBSeM-matching and the chosen $\delta$-matching method coincide, then the quality of the calculated $\delta$-matching is more likely to be good in the sense of statistical matching criteria such as (EPBR) and (IB). On the other hand if the results contradict each other the practitioner should consider the collection of additional data.

Further work in regards to the presented method is the extension of DBSeM, such that $\delta$-matchings for $\delta>0$ can be constructed through a deterministic method as well.


\begin{thebibliography}{10}

\bibitem{Rubin1973}
Rubin DB.
\newblock Matching to Remove Bias in Observational Studies.
\newblock Biometrics. 1973;29(1):159--183.

\bibitem{Anderson1980}
Anderson DW, Kish L, Cornell RG.
\newblock On Stratification, Grouping and Matching.
\newblock Scandinavian Journal of Statistics. 1980;7(2):61--66.

\bibitem{Kupper1981}
Kupper LL, Karon JM, Kleinbaum DG, Morgenstern H, Lewis DK.
\newblock Matching in Epidemiologic Studies: Validity and Efficiency
  Considerations.
\newblock Biometrics. 1981;37(2):271--291.

\bibitem{Ray2012}
Ray WA, Murray KT, Hall K, Arbogast PG, Stein CM.
\newblock Azithromycin and the Risk of Cardiovascular Death.
\newblock New England Journal of Medicine. 2012;366(20):1881--1890.
\newblock doi:{10.1056/NEJMoa1003833}.

\bibitem{Zhang2015}
Zhang Z, Chen K, Ni H.
\newblock Calcium supplementation improves clinical outcome in intensive care
  unit patients: a propensity score matched analysis of a large clinical
  database MIMIC-II.
\newblock SpringerPlus. 2015;4:594.
\newblock doi:{10.1186/s40064-015-1387-7}.

\bibitem{Gozalo2015}
Gozalo P, Plotzke M, Mor V, Miller SC, Teno JM.
\newblock Changes in Medicare Costs with the Growth of Hospice Care in Nursing
  Homes.
\newblock New England Journal of Medicine. 2015;372(19):1823--1831.
\newblock doi:{10.1056/NEJMsa1408705}.

\bibitem{Zhang2016}
Zhang M, Guddeti RR, Matsuzawa Y, Sara JDS, Kwon TG, Liu Z, et~al.
\newblock Left Internal Mammary Artery Versus Coronary Stents: Impact on
  Downstream Coronary Stenoses and Conduit Patency.
\newblock Journal of the American Heart Association. 2016;5(9).
\newblock doi:{10.1161/JAHA.116.003568}.

\bibitem{Cho2016}
Cho SH, Choi GS, Kim GC, Seo AN, Kim HJ, Kim WH, et~al.
\newblock Long-term outcomes of surgery alone versus surgery following
  preoperative chemoradiotherapy for early T3 rectal cancer: A propensity score
  analysis.
\newblock Medicine. 2017;96(12):e6362.
\newblock doi:{10.1097/md.0000000000006362}.

\bibitem{Bruno2017}
Bruno S, Marco VD, Iavarone M, Roffi L, Boccaccio V, Crosignani A, et~al.
\newblock Improved survival of patients with hepatocellular carcinoma and
  compensated hepatitis C virus-related cirrhosis who attained sustained
  virological response.
\newblock Liver International. 2017;37(10):1526--1534.
\newblock doi:{10.1111/liv.13452}.

\bibitem{Nichay2017}
Nichay NR, Gorbatykh YN, Kornilov IA, Soynov IA, Ivantsov SM, Gorbatykh AV,
  et~al.
\newblock Bidirectional cavopulmonary anastomosis with additional pulmonary
  blood flow: good or bad pre-Fontan strategy?
\newblock Interactive CardioVascular and Thoracic Surgery. 2017;24(4):582--589.
\newblock doi:{10.1093/icvts/ivw429}.

\bibitem{Burden2017}
Burden A, Roche N, Miglio C, Hillyer E, Postma D, Herings R, et~al.
\newblock An evaluation of exact matching and propensity score methods as
  applied in a comparative effectiveness study of inhaled corticosteroids in
  asthma.
\newblock Pragmatic and Observational Research. 2017;8:15-30.
\newblock doi:{10.2147/POR.S122563}.

\bibitem{Mcevoy2016}
McEvoy RD, Antic NA, Heeley E, Luo Y, Ou Q, Zhang X, et~al.
\newblock CPAP for Prevention of Cardiovascular Events in Obstructive Sleep
  Apnea.
\newblock New England Journal of Medicine. 2016;375(10):919--931.
\newblock doi:{10.1056/NEJMoa1606599}.

\bibitem{Schermerhorn2008}
Schermerhorn ML, O'Malley AJ, Jhaveri A, Cotterill P, Pomposelli F, Landon BE.
\newblock Endovascular vs. Open Repair of Abdominal Aortic Aneurysms in the
  Medicare Population.
\newblock New England Journal of Medicine. 2008;358(5):464--474.
\newblock doi:{10.1056/NEJMoa0707348}.

\bibitem{Lee2017}
Lee SI, Lee KS, Kim JB, Choo SJ, Chung CH, Lee JW, et~al.
\newblock Early Antithrombotic Therapy after Bioprosthetic Aortic Valve
  Replacement in Elderly Patients: A Single-Center Experience.
\newblock Annals of Thoracic and Cardiovascular Surgery. 2017;23(3):128--134.
\newblock doi:{10.5761/atcs.oa.16-00297}.

\bibitem{Capucci2017}
Capucci A, De~Simone A, Luzi M, Calvi V, Stabile G, D'Onofrio A, et~al.
\newblock Economic impact of remote monitoring after implantable defibrillators
  implantation in heart failure patients: an analysis from the EFFECT study.
\newblock EP Europace. 2017;19(9):1493--1499.
\newblock doi:{10.1093/europace/eux017}.

\bibitem{Tranchart2016}
Tranchart H, Fuks D, Vigano L, Ferretti S, Paye F, Wakabayashi G, et~al.
\newblock Laparoscopic simultaneous resection of colorectal primary tumor and
  liver metastases: a propensity score matching analysis.
\newblock Surgical Endoscopy. 2016;30(5):1853--1862.
\newblock doi:{10.1007/s00464-015-4467-4}.

\bibitem{Zangbar2016}
Zangbar B, Khalil M, Gruessner A, Joseph B, Friese R, Kulvatunyou N, et~al.
\newblock Levetiracetam Prophylaxis for Post-traumatic Brain Injury Seizures is
  Ineffective: A Propensity Score Analysis.
\newblock World Journal of Surgery. 2016;40(11):2667--2672.
\newblock doi:{10.1007/s00268-016-3606-y}.

\bibitem{Dou2017}
Dou JP, Yu J, Yang XH, Cheng ZG, Han ZY, Liu FY, et~al.
\newblock Outcomes of microwave ablation for hepatocellular carcinoma adjacent
  to large vessels: a propensity score analysis.
\newblock Oncotarget. 2017;8(17):28758--28768.
\newblock doi:{10.18632/oncotarget.15672}.

\bibitem{Fukami2017}
Fukami H, Takeuchi Y, Kagaya S, Ojima Y, Saito A, Sato H, et~al.
\newblock Perirenal fat stranding is not a powerful diagnostic tool for acute
  pyelonephritis.
\newblock International Journal of General Medicine. 2017;Volume 10:137--144.
\newblock doi:{10.2147/ijgm.s133685}.

\bibitem{McDonald2017}
McDonald JS, McDonald RJ, Williamson EE, Kallmes DF, Kashani K.
\newblock Post-contrast acute kidney injury in intensive care unit patients: a
  propensity score-adjusted study.
\newblock Intensive Care Medicine. 2017;43(6):774--784.
\newblock doi:{10.1007/s00134-017-4699-y}.

\bibitem{Lai2016}
Lai WH, Rau CS, Wu SC, Chen YC, Kuo PJ, Hsu SY, et~al.
\newblock Post-traumatic acute kidney injury: a cross-sectional study of trauma
  patients.
\newblock Scandinavian Journal of Trauma, Resuscitation and Emergency Medicine.
  2016;24(1):136.
\newblock doi:{10.1186/s13049-016-0330-4}.

\bibitem{Abidov2005}
Abidov A, Rozanski A, Hachamovitch R, Hayes SW, Aboul-Enein F, Cohen I, et~al.
\newblock Prognostic Significance of Dyspnea in Patients Referred for Cardiac
  Stress Testing.
\newblock New England Journal of Medicine. 2005;353(18):1889--1898.
\newblock doi:{10.1056/NEJMoa042741}.

\bibitem{Adams2017}
Adams N, Gibbons KS, Tudehope D.
\newblock Public-private differences in short-term neonatal outcomes following
  birth by prelabour caesarean section at early and full term.
\newblock Australian and New Zealand Journal of Obstetrics and Gynaecology.
  2017;57(2):176--185.
\newblock doi:{10.1111/ajo.12591}.

\bibitem{Kishimoto2017}
Kishimoto M, Yamana H, Inoue S, Noda T, Myojin T, Matsui H, et~al.
\newblock Sivelestat sodium and mortality in pneumonia patients requiring
  mechanical ventilation: propensity score analysis of a Japanese nationwide
  database.
\newblock Journal of Anesthesia. 2017;31(3):405--412.
\newblock doi:{10.1007/s00540-017-2327-1}.

\bibitem{Kong2017}
Kong L, Li M, Li L, Jiang L, Yang J, Yan L.
\newblock Splenectomy before adult liver transplantation: a retrospective
  study.
\newblock BMC Surgery. 2017;17(1):44.
\newblock doi:{10.1186/s12893-017-0243-9}.

\bibitem{Chen2016}
Chen HY, Wang Q, Xu QH, Yan L, Gao XF, Lu YH, et~al.
\newblock Statin as a Combined Therapy for Advanced-Stage Ovarian Cancer: A
  Propensity Score Matched Analysis.
\newblock {BioMed} Research International. 2016;2016:1--5.
\newblock doi:{10.1155/2016/9125238}.

\bibitem{Seung2008}
Seung KB, Park DW, Kim YH, Lee SW, Lee CW, Hong MK, et~al.
\newblock Stents versus Coronary-Artery Bypass Grafting for Left Main Coronary
  Artery Disease.
\newblock New England Journal of Medicine. 2008;358(17):1781--1792.
\newblock doi:{10.1056/NEJMoa0801441}.

\bibitem{Shaw2008}
Shaw AD, Stafford-Smith M, White WD, Phillips-Bute B, Swaminathan M, Milano C,
  et~al.
\newblock The Effect of Aprotinin on Outcome after Coronary-Artery Bypass
  Grafting.
\newblock New England Journal of Medicine. 2008;358(8):784--793.
\newblock doi:{10.1056/NEJMoa0707768}.

\bibitem{Liu2016}
Liu Y, Han J, Liu T, Yang Z, Jiang H, Wang H.
\newblock The Effects of Diabetes Mellitus in Patients Undergoing Off-Pump
  Coronary Artery Bypass Grafting.
\newblock {BioMed} Research International. 2016;2016:1--6.
\newblock doi:{10.1155/2016/4967275}.

\bibitem{Svanstrom2013}
Svanstr\"om H, Pasternak B, Hviid A.
\newblock Use of Azithromycin and Death from Cardiovascular Causes.
\newblock New England Journal of Medicine. 2013;368(18):1704--1712.
\newblock doi:{10.1056/NEJMoa1300799}.

\bibitem{Salati2017}
Salati M, Brunelli A, Xium\`e F, Monteverde M, Sabbatini A, Tiberi M, et~al.
\newblock Video-assisted thoracic surgery lobectomy does not offer any
  functional recovery advantage in comparison to the open approach 3 months
  after the operation: a case matched analysis?
\newblock European Journal of Cardio-Thoracic Surgery. 2017;51(6):1177--1182.
\newblock doi:{10.1093/ejcts/ezx013}.

\bibitem{King2016}
King G, Nielsen R. Why propensity scores should not be used for matching. 2016;
  2015.

\bibitem{Rosenbaum1983}
Rosenbaum PR~and Rubin DB.
\newblock The central role of the propensity score in observational studies for
  causal effects.
\newblock Biometrika. 1983;70(1):41--55.
\newblock doi:{10.1093/biomet/70.1.41}.

\bibitem{Austin2011}
Austin PC.
\newblock An Introduction to Propensity Score Methods for Reducing the Effects
  of Confounding in Observational Studies.
\newblock Multivariate Behavioral Research. 2011;46(3):399--424.
\newblock doi:{10.1080/00273171.2011.568786}.

\bibitem{Pearl2009}
Pearl J.
\newblock Causality: Models, Reasoning and Inference.
\newblock 2nd ed. New York, NY, USA: Cambridge University Press; 2009.

\bibitem{Stuart2014}
Stuart EA, Huskamp HA, Duckworth K, Simmons J, Song Z, Chernew ME, et~al.
\newblock Using propensity scores in difference-in-differences models to
  estimate the effects of a policy change.
\newblock Health Services and Outcomes Research Methodology.
  2014;14(4):166--182.
\newblock doi:{10.1007/s10742-014-0123-z}.

\bibitem{Stuart2010}
Stuart EA.
\newblock Matching Methods for Causal Inference: A Review and a Look Forward.
\newblock Statistical Science. 2010;25(1):1--21.
\newblock doi:{10.1214/09-sts313}.

\bibitem{Iacus2011}
Iacus SM, King G, Porro G.
\newblock Causal inference without balance checking: Coarsened exact matching.
\newblock Political analysis. 2012;20(1):1--24.

\bibitem{Caliendo2005}
Caliendo M, Kopeinig S.
\newblock Some pracitcal Guidance for the implementation of propensity score
  matching.
\newblock Journal of Economic Surveys. 2008;22(1):31--72.
\newblock doi:{10.1111/j.1467-6419.2007.00527.x}.

\bibitem{Rubin1974}
Rubin DB.
\newblock Estimating causal effects of treatments in randomized and
  nonrandomized studies.
\newblock Journal of Educational Psychology. 1974;66.

\bibitem{Rosenbaum1989}
Rosenbaum PR. 
\newblock Optimal Matching for Observational Studies.
\newblock Journal of the American Statistical Association. 1989;84(408):1024--1032.
\newblock doi:{doi:10.2307/2290079}.

\bibitem{Austin2014}
Austin PC, Small DS.
\newblock The use of bootstrapping when using propensity-score matching without
  replacement: a simulation study.
\newblock Statistics in Medicine. 2014;33(24):4306--4319.
\newblock doi:{10.1002/sim.6276}.

\bibitem{Knight2016}
Knight SR, Oniscu GC, Devey L, Simpson KJ, Wigmore SJ, Harrison EM.
\newblock Use of Renal Replacement Therapy May Influence Graft Outcomes
  following Liver Transplantation for Acute Liver Failure: A Propensity-Score
  Matched Population-Based Retrospective Cohort Study.
\newblock PLOS ONE. 2016;11(3):1--14.
\newblock doi:{10.1371/journal.pone.0148782}.

\bibitem{Chiu2016}
Chiu M, Rezai MR, Maclagan LC, Austin PC, Shah BR, Redelmeier DA, et~al.
\newblock Abstract 11545: Moving to a Highly Walkable Neighborhood and
  Incidence of Hypertension: A Propensity-score Matched Cohort Study.
\newblock Circulation. 2015;132(Suppl 3):A11545--A11545.

\bibitem{Ounpraseuth2012}
Ounpraseuth S, Gauss CH, Bronstein J, Lowery C, Nugent R, Hall R.
\newblock Evaluating the Effect of Hospital and Insurance Type on the Risk of
  1-year Mortality of Very Low Birth Weight Infants.
\newblock Medical Care. 2012;50(4):353--360.
\newblock doi:{10.1097/mlr.0b013e318245a128}.

\bibitem{Iacus2012}
Iacus SM, King G, Porro G, N. Katz J.
\newblock Causal Inference Without Balance Checking: Coarsened Exact Matching
\newblock Political Analysis. 2012;20:1--24.

\bibitem{Rubin1976}
Rubin DB.
\newblock Multivariate Matching Methods That are Equal Percent Bias Reducing, I: Some Examples.
\newblock Biometrics. 1976; 1(32):109--120.


\end{thebibliography}
\end{document}